\documentclass{llncs}
\usepackage{graphicx}
\usepackage{caption}
\usepackage{subcaption}
\captionsetup{compatibility=false}
\usepackage{amssymb}
\usepackage{amsmath}
\usepackage{amsfonts}
\usepackage[english]{babel}
\usepackage[bottom]{footmisc}
\usepackage{wrapfig}
\raggedbottom

\begin{document}

\newtheorem{mydef}{Definition}
\title{Stochastic Analysis of Chemical Reaction Networks Using Linear Noise Approximation\thanks{This research is supported by a Royal Society Research Professorship and ERC AdG VERIWARE.}}

\author{Luca Cardelli\inst{1,2} \and Marta Kwiatkowska\inst{2}  \and Luca Laurenti\inst{2} }

\institute{Microsoft Research 
\and Department of Computer Science, University of Oxford }

\maketitle

\begin{abstract}
Stochastic evolution of Chemical Reactions Networks (CRNs) over time is usually analysed through solving the Chemical Master Equation (CME) or performing extensive simulations. Analysing stochasticity is often needed, particularly when some molecules occur in low numbers.
Unfortunately, both approaches become infeasible if the system is complex and/or it cannot be ensured that initial populations are small.
We develop a probabilistic logic for CRNs that enables stochastic analysis of the evolution of populations of molecular species. We present an approximate model checking algorithm based on the Linear Noise Approximation (LNA) of the CME, whose computational complexity is independent of the population size of each species and polynomial in the number of different species. The algorithm requires the solution of first order polynomial differential equations. We prove that our approach is valid for any CRN close enough to the thermodynamical limit.
However, we show on four case studies that it can still provide good approximation even for low molecule counts. Our approach enables rigorous analysis of CRNs that are not analyzable by solving the CME, but are far from the deterministic limit. Moreover, it can be used for a fast approximate stochastic characterization of a CRN.

\end{abstract}
\section{Introduction}
Chemical reaction networks (CRNs) and mass action kinetics are well studied formalisms for modelling biochemical systems \cite{Chellaboina2009}. In recent years, CRNs have also been successfully used as a formal programming language for biochemical systems \cite{Soloveichik2009,Cardelli2010,Chen2013}.
There are two well established approaches for analyzing chemical networks: deterministic and stochastic \cite{Gillespie2013}. The deterministic approach models the kinetics of a CRN as a system of ordinary differential equations (ODEs) and represents average behaviour, valid in the thermodynamic limit 
\cite{Gillespie2009}. The stochastic approach, on the other hand, is based on the Chemical Master Equation (CME) and models the CRN as a continuous-time Markov chain (CTMC) \cite{Cardelli2008}. The stochastic behavior can be analyzed by stochastic simulation \cite{Gillespie2013} or by exhaustive probabilistic model checking  of the CTMC, which can be performed, for example, by using PRISM \cite{KNP11}.

Exhaustive analysis of the CTMC is able to find the best- and worst-case scenarios and is correct for any population size, but suffers from the state-space explosion problem \cite{kwiatkowska2014probabilistic} and can only be used for relatively small systems. In contrast, deterministic methods are much more robust with respect to state-space explosion, but unable to represent stochastic fluctuations, which play a fundamental role when the system is not in thermodynamic equilibrium.

\textbf{Contributions.}
In this paper we develop a novel approach for analysing the stochastic evolution of a CRN based on the Linear Noise Approximation (LNA) of the CME. We formulate SEL (Stochastic Evolution Logic), a probabilistic logic for CRNs that enables reasoning about probability, expectation and variance of linear combinations of populations of the species. Examples of properties that can be specified in our logic are shown in Example \ref{ex-1}. We propose an approximate model checking algorithm for the logic based on the LNA and implement it in Matlab and Java. We demonstrate that the complexity of model checking is polynomial in the initial number of species and independent of the initial molecule counts, thus ameliorating state-space explosion. Further, we show that model checking is exact when approaching the thermodynamic limit. 
Though the algorithm may not be accurate for systems far from the deterministic limit, this generally happens when the populations are small, in which case the analysis can be performed by transient analysis of the induced CTMC \cite{Kwiatkowska2007}. Our approach is essential for CRNs that cannot be analyzed by (partial) state space exploration, because of large or infinite state spaces. Moreover, it is useful for a fast (approximate) stochastic characterization of CRNs, since solving the LNA is much faster than solving the CME \cite{Elf2003}. 
We prove asymptotic correctness of LNA-based model checking and show on four examples that it is still possible to obtain very good approximations even for small population systems, comparing with standard uniformisation \cite{Kwiatkowska2007} and statistical model checking implemented in PRISM~\cite{KNP11}.


\textbf{ Related work.}
The closest work to ours is by Bortolussi \emph{et al.} \cite{Bortolussi2013}, which uses the Central Limit Approximation (CLA) (essentially the same as the LNA) for checking restricted timed automata specifications, assuming a fixed population size.
Wolf \emph{et al.} \cite{Wolf2010} develop a sliding window method to approximately verify infinite-state CTMCs, which applies to cases where most of the probability mass is concentrated in a confined region of the state space. 
Recently, Finite State Projection algorithms (FSP algorithms) for the solution or approximation of the CME have been introduced \cite{munsky2006finite}. Both methods apply to the induced CTMC, but require at least partial exploration of the state space, and are thus not immune to state-space explosion.

\textbf{ Structure of the paper.}
In Section \ref{crn-sec} we summarise the deterministic and stochastic modelling approaches for CRNs, and in Section \ref{lna-sec} we describe the Linear Noise Approximation method. 
Section \ref{logic-sec} introduces the logic SEL and the corresponding model checking algorithm based on the LNA. In Section \ref{results-sec} we demonstrate our approach on four networks taken from the literature. Section~\ref{concl-sec} concludes the paper.

\section{Chemical Reaction Networks}\label{crn-sec}
A \emph{chemical reaction network (CRN)} $C=(\Lambda,R)$ is a pair of finite sets, where $\Lambda$ is the set of \emph{chemical species} and $R$ the set of reactions.
$|\Lambda|$ denotes the size of the set of species. A \emph{reaction} $\tau \in R$ is a triple $\tau=(r_{\tau},p_{\tau},k_{\tau})$, where $r_{\tau},p_{\tau} \in  \mathbb{N}^{|\Lambda|}$ and $k_{\tau} \in \mathbb{R}_{>0} $. $r_{\tau}$ and $p_{\tau}$ represent the stoichiometry of reactants and products and $k_{\tau}$ is the coefficient associated to the rate of the reaction; its dimension is $s^{-1}$. 
We often write reactions as $\lambda_1 + \lambda_3 \, \rightarrow^{k_1}  \,    2\lambda_2 $ instead of $\tau_1=(  [1,0,1]^T,[0,2,0]^T,k_1 )$, where $\cdot^T$ indicates the transpose of a vector.
We define the \emph{net change} associated to a reaction $\tau$ by $\upsilon_{\tau}=p_{\tau} - r_{\tau}$.  For example, for $\tau_1$ as above, we have $\upsilon_{\tau_1}=[-1,2,-1]^T$.

We make the assumption that the system is well stirred, that is, the probability of the next reaction occurring between two molecules is independent of the location of those molecules. We consider fixed volume $V$ and temperature; under these assumptions a \emph{configuration} or \emph{state} $x \in \mathbb{N}^{|\Lambda|}$ of the system is given by the number of molecules of each species. 
We define $[x]=\frac{x}{N}$, the vector of the species \emph{concentration} in $x$ for a given $N$, where $N=V \cdot N_A$ is the volumetric factor, $V$ is the volume of the solution and $N_A$ is Avogadro's number. The physical dimension of $N$ is $Mol^{-1} \cdot L$, where $Mol$ indicates mole and $L$ is litre. Given $\lambda_i \in \Lambda$ then $\#\lambda_i\_x \in \mathbb{N}$ represents the number of molecules of $\lambda_i$ in $x$ and $[\lambda_i]\_x\in \mathbb{R}$ the concentration of $\lambda_i$ in the same configuration. In some cases we elide $x$, and we simply write $\#\lambda_i$ and $[\lambda_i]$ instead of $\#\lambda_i\_x$ and $[\lambda_i]\_x$. They are related by $[\lambda_i]=\frac{\#\lambda_i}{N}$. The dimension of $[\lambda_i]$ is $Mol \cdot L^{-1}$.

The propensity $\alpha_{n,\tau}$ of a reaction $\tau$ in terms of the number of molecules is a function of the current configuration of the system $x$ such that $\alpha_{n,\tau}(x)dt$ is the probability that a reaction event   occurs in the next infinitesimal interval $dt$.
In this paper we assume as valid the stochastic form of the law of mass action, so the propensity rates are proportional to the number of molecules that participate in the reaction \cite{Cardelli2008}. 
Stochastic models consider the system in terms of numbers of molecules, while deterministic ones, generally, in terms of concentrations, 
and the relationship is as follows. 
For a reaction $\tau=(r_{\tau},p_{\tau},k_{\tau})$, given the configuration $x$ and ${r_{\tau,i}}$, the $i$-th component of $r_{\tau}$, then $\alpha_{c,\tau}(x)=k_{\tau} \prod_{i=1}^{|\Lambda|}{([\lambda_i]\_x)}^{r_{\tau,i}}$ is the propensity function expressed in terms of concentrations as given by the deterministic law of mass action.
It is possible to show that, for any order of reaction, $\alpha_{n,\tau}(x)\approx  N\alpha_{c,\tau}(x)$ if $N$ is sufficiently large \cite{Anderson2011}. Note that $\alpha_{c,\tau}$ is independent of $N$.
In this paper we are interested only in finite time horizon, because 
of the problematic character of studying solutions of ODEs for infinite time horizon \cite{bortolussi2013continuous}.

\begin{wrapfigure}{r}{0.46\textwidth}
  \vspace{-3em}
  \begin{center}
\includegraphics[scale=0.34]{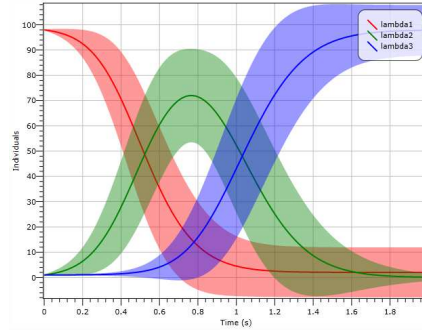}
  \end{center}
    \vspace{-1em}
  \caption{Expected number and standard deviation of species of the CRN of Example \ref{ex-1} for the given initial conditions, calculated by simulating the CME.}
    \label{fig:Example}
   \vspace{-3em}
\end{wrapfigure}
\begin{example}\label{ex-1}
Consider the CRN $C=(\{\lambda_1,\lambda_2,\lambda_3\},R)$, where $R=\{ (\lambda_1 + \lambda_2 \rightarrow^{10} \lambda_2 + \lambda_2),
(\lambda_2 + \lambda_3 \rightarrow^{10} \lambda_3 + \lambda_3) \}$, with initial conditions $\#\lambda_1=98,\#\lambda_2=1,\#\lambda_3=1$, for a system with $N=1000$. 
Figure \ref{fig:Example} plots the expectation and standard deviation of population sizes. We may wish to check if the maximum expected value of $\#\lambda_2$ remains smaller than $75$ molecules during the first $2 sec$. However, the system is stochastic, so we also need to analyse whether the variance is limited enough when $\#\lambda_2$ reaches the maximum. Sometimes, analysis of first and second moments does not suffice, so it could be of interest to check the probability of some events, for instance, is the probability that, between $t_1=0.5 sec$ and $t_2=1.0 sec$, $\#\lambda_2-(\#\lambda_1+\#\lambda_3)>0$ greater than $0.6?$ 
\end{example}

\textbf{Deterministic semantics.}
Let $C=(\Lambda,R)$ be a CRN. The deterministic model approximates the concentration of the species of the system over time as a set of autonomous polynomial first order differential equations:
\begin{equation}
 \frac{\mathrm d \Phi(t)}{\mathrm d t} = F(\Phi(t))
\label{eq:ODE}
\end{equation}
$F(\Phi(t))=\sum_{\tau=(r_{\tau},p_{\tau},k_{\tau}) \in R} \upsilon_{\tau} \alpha_{c,\tau}(\Phi(t))$ and $\alpha_{c,\tau}(\Phi(t))=k_{\tau}\prod_{i=1}^{|\Lambda|}{\Phi_i(t)}^{r_{\tau,i}}$. Function $\Phi : \mathbb{R}_{\geq 0} \rightarrow \mathbb{R}^{|\Lambda|} $ describes the behaviour of the system as a set of deterministic equations assuming a continuous state-space semantics, therefore $\Phi(t) \in \mathbb{R}^{|\Lambda|}$ is the vector of the species concentrations at time $t$. Assuming $t_0=0$, the initial condition is $\Phi(0)=[x_0]$, expressed as a concentration.  
 Note that $F(\Phi(t))$ is Lipschitz continuous, so $\Phi$ exists and is unique \cite{ethier2009markov}.

\textbf{Stochastic semantics.}%
CRNs are well represented by CTMCs, whose transient analysis can be performed via the Chemical Master Equation (CME) \cite{Kampen1992b}.
\begin{mydef}\label{ctmc-defn}
Given a CRN $C=(\Lambda,R)$ and the volumetric factor $N$, we define a time-homogeneous CTMC \cite{cinlar2013introduction,pinsky2010introduction}  $(X^N(t),t \in \mathbb{R}_{\geq 0})$ with state space $S = {\mathbb{N}}^{|\Lambda|}$. Given $x_0 \in S$, the initial configuration of the system, then  $P(X^N(0)=x_0)=1$.
The transition rate from state $x_i$ to state $x_j$ is defined as $r(x_i,x_j) =\sum_{\{\tau \in R | x_j=x_i+ v_{\tau}\}}  N\alpha_{c,\tau}(x_i)$.
\end{mydef}
$X^N(t)$ describes the stochastic evolution of molecular populations of each species at time $t$. For $x \in S$, we define $P^{(t)}(x)=P(X^N(t)=x|X(0)=x_0)$, where $x_0$ is the initial configuration. The CME describes the time evolution of $X^N$ as:
\begin{equation}\begin{split}
	&\frac{\mathrm d}{\mathrm d t} \left( P^{(t)}(x)\ \right) = 
	\sum_{\tau \in R} \{ N\alpha_{c,\tau}(x-\upsilon_{\tau})P^{(t)}(x-\upsilon_{\tau})-N\alpha_{c,\tau}(x)P^{(t)}(x) .\} \\
	\end{split}
\label{eq:CME}
\end{equation}
The CME can be equivalently defined in terms of the infinitesimal generator matrix \cite{Wolf2010}, which admits computing an approximation of the CME using, for example, fast adaptive uniformisation \cite{didier2009fast,DHK14} or the sliding window method \cite{Wolf2010}.

We also define the CTMC $(\frac{X^N(t)}{N},t \in \mathbb{R}_{\geq 0})$ with state space $S =\mathbb{Q}^{|\Lambda|}$. If $[x_0] \in S$ is the initial configuration, then  $P(\frac{X^N(0)}{N}=[x_0])=1.$
The transition rate from state $[x_i]$ to $[x_j]$ is defined as $r([x_i],[x_j]) =\sum_{\{\tau \in R  | [x_j]=[x_i]+ \frac{v_{\tau}}{N}\}}  N\alpha_{c,\tau}(x_i)$. $\frac{X^N(t)}{N}$ is the random vector describing the system at time $t$ in terms of concentrations.
In \cite{Anderson2011,ethier2009markov} it is proved that $\lim\limits_{N \rightarrow \infty}  \sup\limits_{t'\leq t} \| \frac{X^N(t')}{N} -\Phi(t')] \|=0 $ almost surely for every time $t$. This explains the relationship between the two different semantics, where the deterministic solution can be viewed as a limit of the stochastic solution, valid when close enough to the thermodynamic limit.

\section{Linear Noise Approximation}\label{lna-sec}
The solution of the CME can be computationally expensive, or even infeasible, because the set of reachable states can be huge or infinite. The Linear Noise Approximation (LNA) has been introduced by Van Kampen as a second order approximation of the system size expansion of the CME \cite{Kampen1992b}. 
Since stochastic fluctuations depend on $N$, and specifically, for average concentrations, are of the order of $N^{\frac{1}{2}}$ \cite{Elf2003,pinsky2010introduction}, to derive the expansion Van Kampen assumes that:
\begin{equation}
	 X^{N}(t) \approx N\Phi(t) + N^{\frac{1}{2}}Z(t)
\label{eq:hypothsis}
\end{equation}
 where $Z(t)=(Z_1(t),Z_2(t),...,Z_{|\Lambda|})$ is the random vector, independent of $N$, representing the stochastic fluctuations, $\Phi(t)$ is given by the solution of Eqn \eqref{eq:ODE} and  $X^N(t)$ is the random vector of Definition ~\ref{ctmc-defn}. Using this substitution in the system size expansion and then truncating at the second order, the probability distribution of $Z(t)$ is found to be given by the following linear Fokker-Plank equation \cite{Elf2003}:
\begin{equation}
\frac{\mathrm \partial P(Z,t) }{\mathrm \partial t} =-\sum\limits_{i=1}^{|\Lambda|} \sum\limits_{j=1}^{|\Lambda|}
\frac{\mathrm \partial F_j(\Phi(t))}{\mathrm \partial \Phi_i}  \frac{\mathrm \partial (Z_j P(Z,t)) }{\mathrm \partial Z_i} + \frac{1}{2} \sum\limits_{i=1}^{|\Lambda|} \sum\limits_{j=1}^{|\Lambda|} G_{i,j}(\Phi(t))  \frac{\mathrm \partial^2 P(Z,t) }{\mathrm \partial Z_i \partial Z_j}
\label{eq:FokkerPlanck}
\end{equation}
where $ G(\Phi(t))= \sum_{\tau \in R} \upsilon_{\tau} {\upsilon_{\tau}}^T \alpha_{c,\tau}(\Phi(t)) $ and $F_j(\Phi(t))$ is the $j-$th component of $F(\Phi(t))$. The solution of Eqn \eqref{eq:FokkerPlanck} gives a Gaussian process. For every time $t$, $Z(t)$ has a multivariate normal distribution, whose expected value and covariance matrix are the solution of the following equations \cite{Elf2003,Goutsias2013}:
\begin{equation}
		\frac{\mathrm d E[Z(t)]}{\mathrm d t}  = J_F(\Phi(t))E[Z(t)]
\label{eq:EV1}
\end{equation}
\begin{equation}
		\frac{\mathrm d C[Z(t)] }{\mathrm d t}  = J_F(\Phi(t))C[Z(t)] + C[Z(t)]{J^T}_F(\Phi(t))+G(\Phi(t))
\label{eq:COV1}
\end{equation}
where ${J}_F(\Phi(t))$ is the Jacobian of $F(\Phi(t))$. We consider as initial conditions $E[Z(0)]=0$ and $C[Z(0)]=0$. This means that $E[Z(t)]=0$ for every $t$.

It is possible to justify the hypothesis \eqref{eq:hypothsis} noting that in the lowest order the CME expansion reduces to Eqn \eqref{eq:ODE}, and with the following theorem by Kurtz:
\begin{theorem}{\cite{ethier2009markov}}\label{lna-thm}
Consider the subset $E \subset \mathbb{R}^{|\Lambda|}$ on which are defined the propensity functions $\alpha_{c,\tau}$. Let $Z^N(t)$ be the random vector given by $ Z^N(t)={N}^{\frac{1}{2}}(\frac{X^{N}(t)}{N}-\Phi(t))$. Suppose that $\sum\limits_{\tau \in R}|{v_{\tau}}^2| \sup\limits_{X\in K} \alpha_{c,\tau}(X)< \infty$ for each compact $K \subset E$, and that, for $N \rightarrow \infty$, $Z^N(0)=Z(0)$, then $Z^N(t)$ converges in distribution to $Z(t)$.
\end{theorem}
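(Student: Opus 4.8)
The plan is to follow Kurtz's martingale approach to the central limit theorem for density-dependent Markov chains. First I would represent $X^N$ through the random time-change (Poisson) representation: writing $Y_\tau$ for independent unit-rate Poisson processes indexed by the reactions $\tau \in R$, one has
\begin{equation}
X^N(t) = X^N(0) + \sum_{\tau \in R} \upsilon_\tau Y_\tau\Bigl(N\int_0^t \alpha_{c,\tau}\bigl(\tfrac{X^N(s)}{N}\bigr)\, ds\Bigr).
\end{equation}
Centering each Poisson process as $\tilde Y_\tau(u) = Y_\tau(u) - u$ separates a martingale part from a drift part; the drift part, after dividing by $N$, reproduces exactly the right-hand side $F$ of the deterministic ODE \eqref{eq:ODE} evaluated along $X^N/N$, which is what makes the subtraction in the next step clean.

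Next I would substitute the fluctuation scaling $Z^N(t) = N^{1/2}(X^N(t)/N - \Phi(t))$ and subtract \eqref{eq:ODE}. The drift contribution becomes $N^{1/2}\sum_\tau \upsilon_\tau \bigl(\alpha_{c,\tau}(X^N/N) - \alpha_{c,\tau}(\Phi)\bigr)$, which by a first-order Taylor expansion around $\Phi(s)$ equals $\sum_\tau \upsilon_\tau \nabla\alpha_{c,\tau}(\Phi(s))\cdot Z^N(s)$ plus a remainder; since $\sum_\tau \upsilon_\tau \nabla\alpha_{c,\tau}^T = J_F$, this produces the integral $\int_0^t J_F(\Phi(s)) Z^N(s)\, ds$ and hence the Jacobian term of \eqref{eq:EV1}--\eqref{eq:COV1}. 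The remainder is controlled by invoking Kurtz's law of large numbers, which gives $\sup_{s\le t}\|X^N(s)/N - \Phi(s)\| \to 0$ almost surely, so the trajectories stay in a compact $K \subset E$ with high probability and the Taylor remainder is uniformly negligible. The noise contribution is $N^{-1/2}\sum_\tau \upsilon_\tau \tilde Y_\tau(N\int_0^t \alpha_{c,\tau}(X^N/N)\, ds)$; by the functional central limit theorem for Poisson processes $N^{-1/2}\tilde Y_\tau(Nu)$ converges to a standard Brownian motion $W_\tau(u)$, and composing with the converging random time change yields the limit $\sum_\tau \upsilon_\tau W_\tau(\int_0^t \alpha_{c,\tau}(\Phi(s))\, ds)$, a Gaussian martingale whose quadratic variation is $\int_0^t G(\Phi(s))\, ds$ with $G = \sum_\tau \upsilon_\tau \upsilon_\tau^T \alpha_{c,\tau}$.

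To finish I would establish tightness of $\{Z^N\}$ in Skorokhod space, using the finiteness hypothesis $\sum_\tau |\upsilon_\tau^2|\sup_K \alpha_{c,\tau} < \infty$ to bound the predictable quadratic variations on compacts, and then pass to the limit in the integral equation by the continuous mapping theorem combined with a Gronwall estimate that absorbs the drift; the assumed convergence $Z^N(0)\to Z(0)$ supplies the initial datum for the limit. The limit $Z$ then satisfies the linear stochastic differential equation
\begin{equation}
dZ(t) = J_F(\Phi(t))Z(t)\, dt + \sqrt{G(\Phi(t))}\, dW(t),
\end{equation}
whose unique solution is precisely the Gaussian process characterized by \eqref{eq:FokkerPlanck}--\eqref{eq:COV1}, uniqueness following from linearity together with the Lipschitz continuity of $J_F$ along the bounded path $\Phi$. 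I expect the main obstacle to be the noise term rather than the routine Taylor and Gronwall bookkeeping: one must make rigorous the joint convergence of the compensated Poisson martingales under their random, state-dependent time changes, and show that this convergence is compatible with the simultaneous convergence of the drift so that the two limits can be assembled into a single integral identity.
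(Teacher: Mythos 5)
Your proposal is correct, but note that the paper itself gives no proof of this theorem: it is quoted verbatim as a known result from Ethier and Kurtz \cite{ethier2009markov}. What you have written is essentially a faithful reconstruction of the standard proof in that reference (Kurtz's martingale argument via the Poisson random time-change representation, drift/noise decomposition, Taylor expansion around $\Phi$, functional CLT for the compensated Poisson processes, tightness, and a Gronwall argument yielding the limiting linear SDE), so your approach coincides with the source's.
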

The LNA thus permits approximation of the probability distribution of $X^N(t)$ with the probability distribution of $Y^N(t)=N\Phi(t) + N^{\frac{1}{2}}Z(t)$. It is easy to show that $Y^N(t)$ has a Gaussian distribution; indeed, $Z(t)$ is Gaussian distributed, and $N$ and $\Phi(t)$ are deterministic.

To compute the LNA it is necessary to solve $O(|\Lambda|^2)$ first order differential equations, but the complexity is independent of the initial number of molecules of each species. 
Therefore, one can avoid the exploration of the state space that methods based on uniformisation rely upon. 

Theorem \ref{lna-thm} alone only guarantees convergence in distribution. However, in \cite{Wallace2012}, LNA is derived as an approximation of the Chemical Langevin Equation (CLE) \cite{Gillespie2000}, rather than system size expansion. This shows that LNA is valid for every real chemical system close enough to the thermodynamical limit, at least for a limited time. 
Thus, LNA is exact in the limit of high populations, but can also be used for small populations if the behaviour is not too far from the deterministic limit, taking into account the continuous nature of the approximation and Gaussian assumptions on the noise \cite{Goutsias2013,Wallace2012}. 

\subsection{Probabilistic analysis of CRNs}\label{proba-sec}

We have shown that $X^N$ can be approximated by $Y^{N}(t)=N\Phi(t) + N^{\frac{1}{2}}Z(t)$, where $Y^N(t)$ has a multivariate Gaussian distribution, 
%
so it is completely characterized by its expected value and covariance matrix, whose values are respectively $E[Y^{N}(t)]=N\Phi(t)$ and $C[Y^{N}(t)]=N^{\frac{1}{2}} C[Z(t)] N^{\frac{1}{2}}=N C[Z(t)]$.

Since $Y^N$ has a multivariate normal distribution then every linear combination of its components is normally distributed. 
Therefore, given $B=[ b_{1} , b_{2}, \cdots , b_{|\Lambda|}]$ where $b_{1},b_{2},...,b_{|\Lambda|} \in \mathbb{Z}$, we can consider the random variable $BY^N(t)$, which defines a linear combination of the species at time $t$. For every $t$, $B{Y^{N}(t)}$ is a normal random variable, whose expected value and variance are
\begin{equation}
		E[B{Y^{N}(t)}]= BE[{Y^{N}}(t)]
		\label{eq:excom}
\end{equation}
\begin{equation}
		 C[B{Y^{N}(t)}]= BC[{Y^{N}}(t)]B^T
		\label{eq:varcom}
\end{equation}
For a specific time  $t_k$, it is possible to calculate the probability that $B{Y^{N}(t_k)} $ is within a set $I$ of closed, disjoint real intervals $[l_i,u_i]$, where $l_i,u_i \in \mathbb{R} \cup \{+\infty,-\infty\}$.
This probability $\Omega_{Y^N,B,I}(t_k)$ is given by  
\begin{equation}
		\Omega_{Y^N,B,I}(t_k) = \sum_{[l_i,u_i] \in I} \int\limits_{l_i}^{u_i} g(x|E[B{Y^{N}(t_k)} ],C[B{Y^{N}(t_k)}])dx
\label{eq:Combination}
\end{equation}
where $g(x|EV,\sigma^2)$ is the Gaussian distribution with expected value $EV$ and covariance $\sigma^2$. We recall that it is possible to find numerical solution of Eqn \eqref{eq:Combination} in constant time using the Z table \cite{patel1996handbook}.

\begin{example}
Consider the CRN of Example \ref{ex-1}, then we can obtain the probability that $\#\lambda_1-2\#\lambda_3$ is at least $10$ at time $20$ by defining $B'=[1,0,-2], \, I'=\{[10,+\infty]\}$ and calculating $\Omega_{Y^N,B',I'}(20)$. 
\end{example}
The following theorems are consequences of results in \cite{Wallace2012}, which can be generalized for reactions with a finite number of reagents and products. They show asymptotic pointwise convergence of expected value, variance and probability.

\begin{theorem}{}\label{prob-thm}
Let $C=(\Lambda,R)$ be a CRN. Suppose the solution of Eqn \eqref{eq:COV1} is bounded, then, approaching the thermodynamic limit, for any finite instant of time $t_i$
\begin{equation}
\lim_{N \to \infty} \| \Omega_{Y^N,B,I}(t_i)- \widetilde{\Omega}_{X^N,B,I}(t_i) \| = 0 , 
\end{equation}
where  $\widetilde{\Omega}_{X^N,B,I}(t_i)$ is the probability that $B(X^N)$ is within $I$ at time $t_i$.
\end{theorem}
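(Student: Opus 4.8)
The plan is to reduce the difference of the two interval probabilities to the Kolmogorov (uniform CDF) distance between the real-valued random variables $BX^N(t_i)$ and $BY^N(t_i)$, and then to show that this distance vanishes as $N\to\infty$ by combining Kurtz's theorem (Theorem \ref{lna-thm}) with P\'olya's theorem on uniform convergence to a continuous limit.

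First I would put both variables on a common footing. From the definition of $Y^N$ and of $Z^N(t)=N^{1/2}(\frac{X^N(t)}{N}-\Phi(t))$ one has
\begin{equation*}
BX^N(t_i)=BN\Phi(t_i)+N^{1/2}\,BZ^N(t_i),\qquad BY^N(t_i)=BN\Phi(t_i)+N^{1/2}\,BZ(t_i),
\end{equation*}
so both are the image of $BZ^N(t_i)$, respectively $BZ(t_i)$, under the same strictly increasing affine map $h(w)=BN\Phi(t_i)+N^{1/2}w$. Since the Kolmogorov distance is invariant under a common strictly increasing transformation of the line,
\begin{equation*}
\sup_{z\in\mathbb{R}}\bigl|P(BX^N(t_i)\le z)-P(BY^N(t_i)\le z)\bigr|=\sup_{w\in\mathbb{R}}\bigl|F_{BZ^N(t_i)}(w)-F_{BZ(t_i)}(w)\bigr|=:\delta_N .
\end{equation*}

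Next I would show $\delta_N\to 0$. Theorem \ref{lna-thm} gives $Z^N(t_i)\to Z(t_i)$ in distribution (its hypotheses hold for the polynomial propensities on compacts, and $Z^N(0)=Z(0)=0$ since $X^N(0)=x_0$ deterministically with $\Phi(0)=[x_0]$); the continuous mapping theorem then yields $BZ^N(t_i)\to BZ(t_i)$ in distribution. Here the boundedness of the solution of Eqn \eqref{eq:COV1} guarantees that $C[Z(t_i)]$ is finite, so $BZ(t_i)\sim\mathcal{N}(0,BC[Z(t_i)]B^T)$ is a genuine Gaussian whose CDF $F_{BZ(t_i)}$ is continuous. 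By P\'olya's theorem, convergence in distribution to a continuous limiting CDF is uniform, hence $\delta_N=\sup_w|F_{BZ^N(t_i)}(w)-F_{BZ(t_i)}(w)|\to 0$. (If $BC[Z(t_i)]B^T=0$ the Gaussian limit is degenerate and the claim follows from a direct, separate argument.)

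Finally I would pass from the uniform CDF bound to the interval probabilities. Writing $I=\bigcup_{j=1}^{m}[l_j,u_j]$ as a finite disjoint union and expressing each interval probability as a difference of CDF values (using left limits to absorb any atoms of the discrete variable $BX^N$), each term $\bigl|P(BX^N(t_i)\in[l_j,u_j])-P(BY^N(t_i)\in[l_j,u_j])\bigr|$ is bounded by $2\delta_N$, so that $\bigl|\widetilde{\Omega}_{X^N,B,I}(t_i)-\Omega_{Y^N,B,I}(t_i)\bigr|\le 2m\,\delta_N\to 0$. The main obstacle is precisely this last passage: because the relevant intervals for the centred variables, $[\frac{l_j-BN\Phi(t_i)}{N^{1/2}},\frac{u_j-BN\Phi(t_i)}{N^{1/2}}]$, drift with $N$, plain convergence in distribution (which controls only fixed continuity sets) is insufficient; it is the uniformity supplied by P\'olya's theorem, available because the LNA limit is a continuous Gaussian law, that makes the estimate hold along the moving intervals and in spite of the atoms of $X^N$.
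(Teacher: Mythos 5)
Your proof is correct (modulo the degenerate case you flag), but it follows a genuinely different route from the paper. The paper does not give a self-contained argument at all: it obtains Theorem \ref{prob-thm} as a consequence of results in the cited work of Wallace \emph{et al.}, where the LNA is derived as an approximation of the Chemical Langevin Equation, and simply asserts that those results generalize to reactions with finitely many reagents and products. You instead build the proof entirely from ingredients already in the paper plus standard probability theory: Kurtz's theorem (Theorem \ref{lna-thm}) gives $Z^N(t_i)\Rightarrow Z(t_i)$, the continuous mapping theorem transfers this to $BZ^N(t_i)\Rightarrow BZ(t_i)$, P\'olya's theorem upgrades pointwise CDF convergence to uniform convergence because the Gaussian limit law is continuous, and the invariance of the Kolmogorov distance under the common strictly increasing affine rescaling $w\mapsto BN\Phi(t_i)+N^{1/2}w$ is exactly what lets you handle the fact that the intervals, re-expressed for the centred, rescaled variables, drift with $N$ --- the step at which naive use of convergence in distribution would fail. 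Your route buys transparency and rigour relative to the paper's appeal to an external derivation; the paper's route via the CLE buys something your argument does not, namely support for the claim that the LNA remains a usable approximation at finite $N$ away from the limit, which matters for the paper's practical claims. Two small points you should tighten: (i) the degenerate case $BC[Z(t_i)]B^T=0$ is deferred rather than handled --- there the limit CDF is discontinuous, P\'olya does not apply, and interval endpoints falling on the atom genuinely matter, so the ``direct, separate argument'' needs to be supplied (or nondegeneracy added as a hypothesis, consistent with Eqn \eqref{eq:Combination} presupposing a density); (ii) you use boundedness of the solution of Eqn \eqref{eq:COV1} only to get finiteness of $C[Z(t_i)]$ at the single time $t_i$, which is weaker than the paper's hypothesis, so it is worth stating explicitly that this is all you need.
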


\begin{theorem}{}\label{exp-var-thm}
Suppose the solution of Eqn \eqref{eq:COV1} is bounded, then, approaching the thermodynamic limit, for any finite instant of time $t_k$
\begin{equation}
\lim_{N \to \infty} \| C[BY^{N}(t_k)] - C[BX^{N}(t_k)] \| = 0
\end{equation}
\begin{equation}
\lim_{N \to \infty} \| E[BY^{N}(t_k)] -E[BX^{N}(t_k)] \| = 0.
\end{equation}
\addtocounter{theorem}{-2}
\end{theorem}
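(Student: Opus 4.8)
The plan is to reduce both claims to statements about the fluctuation process and then to transfer convergence in distribution into convergence of the first two moments. Writing $Z^N(t)=N^{\frac{1}{2}}(\frac{X^N(t)}{N}-\Phi(t))$ as in Theorem~\ref{lna-thm}, we have the affine representations $X^N(t)=N\Phi(t)+N^{\frac{1}{2}}Z^N(t)$ and, by definition, $Y^N(t)=N\Phi(t)+N^{\frac{1}{2}}Z(t)$. Applying the linear functional $B$ and using linearity of expectation together with the bilinearity of covariance in \eqref{eq:excom}--\eqref{eq:varcom}, the deterministic drift $NB\Phi(t_k)$ cancels in every difference, so that $E[BY^N(t_k)]-E[BX^N(t_k)]=N^{\frac{1}{2}}B(E[Z(t_k)]-E[Z^N(t_k)])$ and $C[BY^N(t_k)]-C[BX^N(t_k)]=N\,B(C[Z(t_k)]-C[Z^N(t_k)])B^T$. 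Since the initial conditions give $E[Z(t_k)]=0$, the two assertions are equivalent to $N^{\frac{1}{2}}\|BE[Z^N(t_k)]\|\to 0$ and $N\|B(C[Z^N(t_k)]-C[Z(t_k)])B^T\|\to 0$.

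First I would establish bare moment convergence. Theorem~\ref{lna-thm} gives $Z^N(t_k)\Rightarrow Z(t_k)$ in distribution; to promote this to convergence of the first and second moments I would verify uniform integrability of the families $\{Z^N(t_k)\}_N$ and $\{Z^N(t_k)Z^N(t_k)^T\}_N$. This is exactly where the hypothesis that the solution of Eqn~\eqref{eq:COV1} is bounded enters: it supplies a uniform-in-$N$ control of the second moments of $Z^N(t_k)$ (and, after a standard strengthening of Kurtz's estimates, of a $(2+\delta)$-moment), which yields uniform integrability. Combined with the continuous-mapping theorem applied to the quadratic functional $z\mapsto zz^T$, this gives $E[Z^N(t_k)]\to E[Z(t_k)]=0$ and $C[Z^N(t_k)]\to C[Z(t_k)]$, where $C[Z(t_k)]$ is the solution of Eqn~\eqref{eq:COV1}.

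The main obstacle is that bare moment convergence is not enough: the scaling factors $N^{\frac{1}{2}}$ and $N$ multiply discrepancies that a priori only tend to $0$, so I must control the \emph{rate} at which the fluctuation moments approach their limits. This is the substantive content of \cite{Wallace2012}, and my proof would proceed by generalizing its system-size-expansion argument from second-order kinetics to reactions with an arbitrary finite number of reagents and products: one expands the exact moment evolution derived from the CME~\eqref{eq:CME} in powers of $N^{-\frac{1}{2}}$ and shows that the LNA equations \eqref{eq:EV1}--\eqref{eq:COV1} reproduce the leading orders, so that the mean discrepancy is $o(N^{-\frac{1}{2}})$ and the covariance discrepancy is $o(N^{-1})$ on the fixed horizon $[0,t_k]$. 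The boundedness hypothesis is used again here to ensure that the next-order correction terms, which are driven by $J_F(\Phi)$ and $G(\Phi)$ along the trajectory, remain controlled uniformly over $[0,t_k]$; a Gr\"onwall estimate then closes the argument and removes the blow-up, yielding both limits. I expect the bookkeeping of these higher-order terms for general reaction orders, rather than the probabilistic convergence itself, to be the most delicate part.
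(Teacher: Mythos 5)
Your reduction is correct and is already more explicit than anything the paper offers: the paper's entire ``proof'' of this theorem is the sentence that the theorems ``are consequences of results in \cite{Wallace2012}, which can be generalized for reactions with a finite number of reagents and products.'' You correctly cancel the drift $NB\Phi(t_k)$, use $E[Z(t_k)]=0$, and reduce the claim to $N^{\frac{1}{2}}\|BE[Z^N(t_k)]\|\to 0$ and $N\|B(C[Z^N(t_k)]-C[Z(t_k)])B^T\|\to 0$; you are also right that convergence in distribution from Theorem \ref{lna-thm} plus uniform integrability cannot finish the job, because the scaling factors demand a rate. (One small caveat: boundedness of the solution of \eqref{eq:COV1} controls the limit covariance $C[Z(t)]$, not the moments of $Z^N$ uniformly in $N$, so even your uniform-integrability step needs the separate strengthening of Kurtz's estimates that you allude to.)

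The genuine gap is in the final step, and it is not just deferred bookkeeping. First, you misattribute the method of \cite{Wallace2012}: as the paper stresses in Section \ref{lna-sec}, that work derives the LNA from the Chemical Langevin Equation, \emph{not} from the system size expansion, precisely because the expansion route only yields distributional statements; the moment guarantees the paper wants are drawn from that CLE analysis. Second, and more seriously, the route you propose instead --- expanding the CME moment equations in powers of $N^{-\frac{1}{2}}$ --- produces an obstruction rather than a proof. One order beyond the LNA, the fluctuation mean obeys (schematically) $\frac{\mathrm d}{\mathrm d t}E[Z^N] \approx J_F(\Phi)E[Z^N] + \tfrac{1}{2}N^{-\frac{1}{2}}\sum_{j,k}\partial^2_{jk}F(\Phi)\,C_{jk}[Z(t)]$, so for any CRN with bimolecular reactions (nonzero Hessian of $F$, e.g.\ the network of Example \ref{ex-1}) one gets $E[Z^N(t_k)]=\Theta(N^{-\frac{1}{2}})$ with a nonvanishing leading coefficient, hence $N^{\frac{1}{2}}BE[Z^N(t_k)]$ converges to a nonzero constant rather than to $0$; a parity argument in the same expansion gives a $\Theta(N^{-1})$ covariance discrepancy, hence an $O(1)$ gap after rescaling. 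In other words, the higher-order terms you plan to control by Gr\"onwall are not small errors to be absorbed --- they are exactly the quantity the theorem asks to vanish, and your claimed $o(N^{-\frac{1}{2}})$ and $o(N^{-1})$ rates are false in general along this route; boundedness of \eqref{eq:COV1} only keeps these correction terms finite, it does not remove them. The discrepancies do vanish if one works in concentration units (divide through by $N$), which is the sense in which the CLE-based estimates of \cite{Wallace2012} should be read and, presumably, what the paper intends; but as a proof via higher-order system size expansion in molecule numbers, your plan breaks at its central step.
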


To solve the differential equations \eqref{eq:EV1} and \eqref{eq:COV1}, it is necessary to use a numerical method such as adaptive Runge-Kutta algorithm \cite{butcher1987numerical}. This yields the solution for a finite set of sampling times $\Sigma=[t_1,...,t_{|\Sigma|}] \in \mathbb{R}^{|\Sigma|}$, where $t_1\leq...\leq t_k  \leq ... \leq t_{|\Sigma|}$ and $|\Sigma|$ is the sample size. 
Assuming $Y^N$ is separable, that is, it is possible to completely define the behavior of $Y^N$ by only considering a countable number of points, we can calculate $\Omega_{Y^N,B,I}$ for any point in $\Sigma$ and if points are dense enough then this set exhaustively describes the probability that $BX^N$ is within $I$ over time.
This restriction is not a limitation since for any stochastic process there exists a separable modification of it \cite{ito2006essentials}.

\section{Stochastic Evolution Logic (SEL)}\label{logic-sec}
Let $C = (\Lambda,R)$ be a CRN with initial state $x_0$, in a system of size $N$. We now define the logic SEL (Stochastic Evolution Logic) which enables evaluation of the probability, variance and expectation of linear combinations of populations of the species of $C$.

The syntax of SEL is given by 

\[
\eta := P_{\sim p}[ B,I]_{[t_1,t_2]}  \quad |\quad Q_{\sim v}[B]_{[t_1,t_2]} \quad |\quad \eta_1 \wedge \eta_2 \quad  | \quad \eta_1 \vee \eta_2 \quad
\]

\noindent
where $Q=\{supV, infV,supE,infE\}$, $\sim=\{<,>\}$, $p \in [0,1]$, $v \in \mathbb{R}$, $B \in \mathbb{Z}^{|\Lambda|}$, $I=\{[l_i,u_i] \, | \, l_i,u_i \in \mathbb{R} \cup [+\infty,-\infty] \, \wedge [l_i,u_i]\cap [l_j,u_i]=\emptyset, \, i \neq j\}$ and $[t_1,t_2]$ is a closed interval, with the constraint that $t_1\leq t_2$ and $t_1,t_2 \in \mathbb{R}$. If $t_1=t_2$ the interval reduces to a singleton.

Formulae $\eta$ describe global properties of the stochastic evolution of the system. 
$(B,I)$ specifies a linear combination of the species of $C$ and a set of intervals, where $B \in \mathbb{Z}^{|\Lambda|}$ is the vector defining the linear combination and $I$ represents a set of disjoint closed real intervals. $P_{\sim p}[ B,I]_{[t_1,t_2]}$ is the probabilistic operator, which specifies the probability that the linear combination defined by $B$ falls within the range $I$ over the time interval $[t_1,t_2]$. $supE,infE,infV,supV$ respectively yield the supremum and infimum of expected value and variance of the random variables associated to $B$ within the specified time interval.

\begin{example}
Consider the CRN of Example \ref{ex-1}. Checking if the variance of $\#\lambda_1$ remains smaller than $K_1$ within $[t_j,t_k]$ can be expressed as  $supV_{<K_1}[[1,0,0]]_{[t_j,t_k]}$.  Another example is checking if, in the same interval, $(\#\lambda_1-\#\lambda_2)$ is at least $K_2$ or within $[K_3,K_4]$, with $K_3<K_4<K_2$, with probability greater than $0.95$: \allowbreak{$P_{> 0.95}[[1,-1,0], ([K_3,K_4],[K_2,\infty])]_{[t_j,t_k]}$}. Equivalently, instead of writing $B$, we write directly the linear combination it defines. For example, in the latter case we have $P_{> 0.95}[(\#\lambda_1-\#\lambda_2), ([K_3,K_4],[K_2,\infty])]_{[t_j,t_k]}$.
\end{example}

\subsubsection{Semantics}
Given a CRN $C=(\Lambda,R)$ with initial configuration $x_0$ in a system of fixed volumetric factor $N$, its stochastic behaviour is described by the CTMC $X^N$ of Definition \ref{ctmc-defn}.
We define a path of CTMC $X^N$ as a sequence $\omega=x_0 t_1 x_1 t_1 x_2...$ where $x_i$ is a state and $t_i \in \mathbb{R}_{>0} $ is the time spent in the state $x_i$. A path is finite if there is a state $x_k$ that is absorbing. $\omega \otimes t$ is the state of the path at time $t$. $Path(X^N,x_0)$ is the set of all (finite and infinite) paths of the CTMC starting in $x_0$. We work with the standard probability measure $Prob$ over paths $Path(X^N,x_0)$ defined using cylinder sets \cite{Kwiatkowska2007}.

We first define when a path $\omega$ satisfies $ (B,I)$ at time $t$ 
\[
\omega,t \models  (B,I) \quad    \leftrightarrow  \quad   \exists [l_i,u_i] \in I \,.\,   l_i \leq B(\omega \otimes t) \leq u_i .
\]
\noindent
Note that $B(\omega \otimes t)$ is well defined because $\omega \otimes t \in \mathbb{N}^{|\Lambda|}$.
We now define $Pr^{X^N}_{B,I} (t)=Prob\{\omega \in Path(X^N,x_0)\,|\, \omega ,t \models  (B,I)\}$, then if the time interval is a singleton the satisfaction relation for the probabilistic operator is
\[
{X^N}, x_0 \models P_{\sim p}[ B,I]_{[t_1,t_1]}   \quad \leftrightarrow \quad  Pr^{X^N}_{B,I} (t_1) \sim p
\]
Instead, for $t_1 < t_2$ we have
\[
{X^N}, x_0 \models P_{\sim p}[ B,I]_{[t_1,t_2]}   \quad \leftrightarrow \quad \frac{1}{t_2-t_1}  \int_{t_1}^{t_2} Pr^{X^N}_{B,I} (t)\,\mathrm{d}t \sim p
\]
 $Pr^{X^N}_{B,I} (t)$ is the probability of the set of paths of $X^N$ such that the linear combination of the species defined by $B$ falls within $I$. It is well defined since we have previously defined the probability measure $Prob$ on $Path(X^N, x_0)$.  To define the satisfaction relation of the probabilistic operator we simply take the average value of $Pr^{X^N}_{B,I} (t)$ during the interval $[t_1,t_2]$. For the remaining operators the satisfaction relation is defined as
\[
{X^N},x_0 \models supV_{\sim v}[B]_{[t_1,t_2]}  \quad    \leftrightarrow \quad   sup(C[B(X^N)],[t_1,t_2]) \sim v
\]
\[
{X^N},x_0 \models infV_{\sim v}[B]_{[t_1,t_2]}  \quad    \leftrightarrow \quad     inf(C[B(X^N)],[t_1,t_2]) \sim v
\]
\[
{X^N},x_0 \models supE_{\sim v}[B]_{[t_1,t_2]}  \quad    \leftrightarrow \quad     sup(E[B(X^N)],[t_1,t_2]) \sim v
\]
\[
{X^N},x_0 \models infE_{\sim v}[B]_{[t_1,t_2]}  \quad    \leftrightarrow \quad     inf(E[B(X^N)],[t_1,t_2]) \sim v
\]
\[
{X^N}, x_0 \models \eta_1 \wedge \eta_2 \quad    \leftrightarrow  \quad     {X^N}, x_0 \models \eta_1 \wedge {X^N}, x_0 \models \eta_2 
\]
\[
{X^N}, x_0 \models \eta_1 \vee \eta_2  \quad    \leftrightarrow \quad     {X^N}, x_0 \models \eta_1 \vee {X^N}, x_0 \models \eta_2 
\]
 $inf(\cdot,[t_1,t_2])$ and $sup(\cdot,[t_1,t_2])$ respectively denote the infimum and supremum within $[t_1,t_2]$. 

\subsection{LNA-based Approximate Model Checking for CRNs}
Stochastic model checking of CRNs is usually achieved by transient analysis of the CTMC $X^N$ \cite{Kwiatkowska2007}, which involves solving the CME and thus suffers from the state-space explosion problem. 
We propose an approximate model checking algorithm based on LNA.
The inputs are a SEL formula $\eta$, the stochastic process $X^N$ induced by the CRN and initial state $x_0$. 
The output is $true$ in case the formula is verified, and otherwise $false$. 


The algorithm proceeds by induction on the structure of formula $\eta$, successively computing whether each subformula is satisfied or not.
%
%
We assume that  Eqn \eqref{eq:EV1} and \eqref{eq:COV1} are solved numerically where $\Sigma$ is the finite set of sample points on which their solution is defined and that $t_0$, initial time, and $t_{max}$, final time, are always sampling points.


\subsubsection{Probabilistic operator.}
To evaluate $P_{\sim p}[ (B,I)]_{[t_1,t_2]}$ we construct the function $Prob_{ (B,I)}(t) = \Omega_{Y^N,B,I}(t_i)$ for $ t \in [t_i,t_{i+1}), t_i,t_{i+1} \in \Sigma $ (alternatively, can be constructed as the interpolation of the values of $\Omega_{Y^N,B,I}$ over $\Sigma$ points). 

\begin{lemma}
$Prob_{ (B,I)}$ is integrable on $\mathbb{R}_{\geq 0}$.
\addtocounter{lemma}{-1}
\end{lemma}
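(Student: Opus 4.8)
The plan is to exploit the fact that $Prob_{(B,I)}$ is, by its very construction, a step function with only finitely many pieces, so that integrability reduces to a textbook consequence of boundedness together with a finite set of discontinuities.

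First I would record the two defining properties of $Prob_{(B,I)}$. By construction it is piecewise constant: on each half-open interval $[t_i, t_{i+1})$ determined by consecutive points of $\Sigma$ it equals the fixed value $\Omega_{Y^N,B,I}(t_i)$. Since $\Sigma = [t_1, \ldots, t_{|\Sigma|}]$ is a \emph{finite} set, there are only $|\Sigma|-1$ such pieces, hence at most finitely many discontinuity points, all located at the sampling times. Moreover each step value $\Omega_{Y^N,B,I}(t_i)$ is a probability: by \eqref{eq:Combination} it is a finite sum of integrals of a Gaussian density over the intervals of $I$, and therefore lies in $[0,1]$. Thus $Prob_{(B,I)}$ is bounded.

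Next I would invoke the standard criterion that a bounded function on a bounded interval whose set of discontinuities has Lebesgue measure zero is Riemann integrable. Here the discontinuity set is finite, hence of measure zero, so $Prob_{(B,I)}$ is integrable on the finite horizon $[t_1, t_{|\Sigma|}]$; equivalently, on each piece the integral $\int_{t_i}^{t_{i+1}} Prob_{(B,I)}(t)\,dt = (t_{i+1}-t_i)\,\Omega_{Y^N,B,I}(t_i)$ exists, and summing over the finitely many pieces yields a finite value. To phrase this over all of $\mathbb{R}_{\geq 0}$, I would note that we operate on a finite time horizon, so $Prob_{(B,I)}$ is taken to be constant (e.g.\ zero) outside $[t_1, t_{|\Sigma|}]$; extending an integrable function by a constant keeps it locally integrable, and every integral arising in the semantics ranges over a subinterval $[t_1, t_2] \subseteq [t_0, t_{max}]$, where integrability has just been established.

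I expect no genuine obstacle here; the statement is essentially a sanity check that the integral $\int_{t_1}^{t_2} Pr^{X^N}_{B,I}(t)\,dt$ appearing in the semantics of the time-bounded probabilistic operator is well defined. The only points needing a line of care are (i) confirming the step values are finite, which is immediate from $\Omega_{Y^N,B,I}(t_i) \in [0,1]$, and (ii) interpreting ``integrable on $\mathbb{R}_{\geq 0}$'' as integrability on each bounded subinterval of the finite horizon rather than absolute integrability over an unbounded domain. If instead one adopts the alternative interpolation construction mentioned in the text, the conclusion is even more direct, since a continuous piecewise-linear interpolant is integrable on every bounded interval.
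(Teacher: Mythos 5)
Your proposal is correct and follows essentially the same route as the paper's own proof: both arguments observe that $Prob_{(B,I)}$ is bounded with only finitely many discontinuities (located at the sampling points of $\Sigma$), so the discontinuity set has Lebesgue measure zero and the Lebesgue criterion for Riemann integrability applies. Your additional care about the values lying in $[0,1]$ and about extending the function constantly outside the finite horizon to make sense of ``integrable on $\mathbb{R}_{\geq 0}$'' is a harmless refinement of the same argument, not a different approach.
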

Theorem \ref{prob-thm} guarantees the pointwise correctness of $Prob_{ (B,I)}$ and its integrability allows us to compute the following approximation, then compare to threshold $p$ to decide the truth value.
If $t_2 \neq t_1$ then $\frac{1}{t_2-t_1} \int_{t_1}^{t_2} Pr^{X^N}_{B,I} (t)\,\mathrm{d}t  \approx \frac{1}{t_2-t_1}\int_{t_1}^{t_2}Prob_{B,I}(t)dt$ else if $t_1=t_2$  then  $Pr^{X^N}_{B,I} (t_1)  \approx Prob_{B,I}(t_1)$.

\subsubsection{Expectation and variance operators}
  
To evaluate $ sup(C[B(X^N)],[t_1,t_2]) $, $inf(C[B(X^N)],[t_1,t_2])$, $ sup(E[B(X^N)],[t_1,t_2]) $ and $inf(E[B(X^N)],[t_1,t_2])$  we use the LNA, namely, compute the expected value and variance of Eqn \eqref{eq:varcom} and \eqref{eq:excom}. Theorem \ref{exp-var-thm} guarantees the quality of the approximation.
We can now compute the following approximations, then compare to the threshold $v$:
\[
sup(C[B(X^N)],[t_1,t_2])\approx max  \{ C[BY^{N}(t_k)] \, | \, (t_k \in \Sigma \wedge t_1\leq t_k \leq t_2) \vee (t_k \in L_{[t_1,t_2]} )\} 
\]
\[
inf(C[B(X^N)],[t_1,t_2])\approx min \{ C[BY^{N}(t_k)] \, | \, (t_k \in \Sigma \wedge t_1\leq t_k \leq t_2) \vee (t_k \in L_{[t_1,t_2]} )\}
\]
and similarly for the expected value. $L_{[t_1,t_2]}=\{t_i | t_i \in \Sigma  \,  \, \wedge \,\, \nexists t_j \in \Sigma$ $such$ $that$ $|t_1-t_j|<|t_1-{t}_{i}|  \}$ ensures that for any time interval there is at least one sampling point, even if the interval is a singleton.  
Note that, for each sub-formula, the algorithm involves the calculation of some quantity, so one can define a quantitative semantics for SEL as in \cite{donze2010robust}.


LNA-based model checking can also be used for systems far from the thermodynamic limit, at a cost of some loss of precision. 
LNA assumes continuous state space, and it is not possible to justify this assumption for very small populations.
However, if the distributions of interest are not multi-modal and the noise term is finite and approximated by a Gaussian distribution, then LNA gives very good approximation even for quite small systems. It is clear that model checking accuracy increases as $N$ grows. 
We emphasise that the model checking algorithm we have presented is also able to handle CRNs whose stochastic semantics is an infinite CTMC, which occur frequently in biological models.


\subsubsection{Complexity of LNA-based approximate model checking}
The time complexity for model checking formula $\eta$ against a CRN $C=(\Lambda,R)$ is linear in $|\eta|$. 
In the worst case, analysis of a single operator requires the solution of $O(|\Lambda|^2)$ polynomial differential equations for a bounded time. However, an efficient implementation can solve the $O(|\Lambda|^2)$ ODEs only once for the interval $[0,t_{max}]$, and then reuse this result for every operator, where $t_{max}$ is the greatest (finite) time of interest. 
Note that ODEs are solved in terms of concentrations (a value between $0$ and $1$ by convention), ensuring independence of the number of molecules of each species, although stiffness can slow down the solution of the LNA. 

\section{Experimental Results}\label{results-sec}

We implemented the methods in a framework based on Matlab and Java. The experiments were run on an Intel Dual Core $i7$ machine with $8$ GB of RAM. To solve the differential equations, we use $Matlab$ $ode45$, a variable step Runge-Kutta algorithm.
We employ LNA-based model checking for the analysis of four biological reaction networks: a Phosphorelay Network \cite{Csikasz-Nagy2011}, a Gene Expression Model \cite{thattai2001,Henzinger2011}, the FGF pathway \cite{Heath2006} and the GW network \cite{cardelli2014morphisms}. For every network, the CRN and parameters have been taken from the referenced papers. 
We coded the same CRNs in PRISM in order to compare accuracy and time of execution with standard uniformisation of the CME \cite{Kwiatkowska2007} and statistical model checking (SMC) techniques (confidence interval method) as implemented in PRISM. 
For the FGF and GW case studies, we cannot use global analysis nor SMC, because the state space is too large for direct analysis, and SMC requires many time-consuming simulations to obtain good accuracy. 

\textbf{Phosphorelay Network.}
We consider a three-layer phosphorelay network whose structure is derived from \cite{Csikasz-Nagy2011}. Each layer $(L1,L2,L3)$ can be found in phosphorylate form $(L1p,L2p,L3p)$. We consider the initial condition $ \#L1p=\#L2p=\#L3p=0, \, \#L1=\#L2p=\#L3p=Init$,  where $Init \in \mathbb{N}$. Then we analyse the ligand $B$, whose initial condition is $\#B=3*Init$.  We are interested in checking the following SEL property:
\[
P_{>0.7}[(\#L1p-\#L3p),[0,+\infty]]_{[0,100]}\wedge P_{>0.98}[(\#L3p-\#L1p),[0,+\infty]]_{[300,600]}
\]
which is verified if, in the first interval, the probability that $\#L1p$ is greater than $\#L3p$ is $>0.7$ and if, between $300$ and $600$, with probability $> 0.98$, $\#L3p$ is greater than $\#L1p$.
We evaluate this formula in three different initial conditions, firstly $Init=32$ and $N=5000$, then $Init=64$ and $N=10000$, and finally $Init=100$ and $N=15625$, so the same concentration but different numbers of molecules.
In all cases, the LNA-based model checking evaluates the formula as true. To understand the quality of the approximation, we check the following quantitative formula $P_{=?}[(\#L3p-\#L1p,[0,+\infty])]_{[T,T]}$ for $ T \in [0,600]$ (in our implementation  $=?$ gives the quantitity calculated by model checking the operator). We compare the results with the evaluation of the corresponding CSL formula using standard uniformisation ($Unif$) with error $10^{-7}$ \cite{Kwiatkowska2007}. The following table shows the results. $MaxErr$ is the maximum error computed by LNA-based approach compared to standard uniformisation and $AvgErr$ is the average error; $Time(\cdot)$ stands for execution time.

\begin{center}
	\begin{tabular}{|l|l|l|l|l|}
	\hline
	Init                 &  \,		Time (LNA)	 \,	 & \,  	Time (Unif)	\, 	  & \,  MaxErr \, 	  & \,  AvgErr    \\ 
	\hline
	20        & 		  0.22 sec           	 & 	 	2 min   &  0.0675   &  0.0519 \\ 
	32   &  	  0.23   sec      	 & 	 	5 min     	&  0.059   &  0.02        \\ 
	64   & 		  0.26   sec      	 & 	 	$>$ 2 hr     &  0.0448  &  0.0027  \\ 
	100   & 		  0.3   sec      	 & 	 	$>$ 2 hr     &  0.03  &  0.0011  \\ 
	\hline
	\end{tabular}
\end{center}
Note that as $Init$ increases the error of our method decreases, while the execution time is practically independent of the molecular count.
LNA-based algorithms are faster in all cases. Thus our approach can be used even for quite small population systems, giving a fast approximate stochastic characterization. 

\textbf{Gene Expression.}
We consider a simple CRN that models the transcription of a gene into an mRNA molecule, and the translation of the latter into a protein.
The CRN, rates and initial conditions are the same as in \cite{Henzinger2011}. The stochastic semantics of the reaction network is an infinite CTMC, and we use this model to show that our method can handle infinite state-space processes.
We consider the quantitative property $supE_{=?}[\#mRNA]_{[T,T]}$, which gives the number of molecules of $mRNA$ in the system at time $T$.
We compare our method with SMC estimation of the same property by using $50000$ simulations, for $T=\{ 300,600,900,1200\}$, and in the following tables we compare the results in terms of execution time ($Time (\cdot)$) and expected value of $\#mRNA$ estimated ($ExpVal (\cdot)$). 
LNA-based model checking is several orders of magnitude faster without loss of accuracy.
\begin{center}
	\begin{tabular}{|l|l|l|l|l|}
	\hline
	T                &  \,		Time (LNA)	 \,	 & \,  	Time (Simul)	\, 	  & \,  ExpVal (LNA) \, 	  & \,  ExpVal (Simul)    \\ 
	\hline
	300        & 		  0.52 sec           	 & 	 	75 sec   &  100.17   &  100.14 $\pm$ 0.1 \\ 
	600   &  	  0.54   sec      	 & 	 	198 sec     	&  142.15   &  142.11 $\pm$ 0.1       \\ 
	900   & 		  0.54   sec      	 & 	 	337 sec     &  159.73  &  159.74 $\pm$ 0.1  \\ 
	1200   & 		  0.56   sec      	 & 	 	483 sec     &  167.1  &  167.1 $\pm$ 0.1  \\ 
	\hline
	\end{tabular}
\end{center}

\textbf{FGF.}
We consider the model of Fibroblast Growth Factor (FGF) signalling pathway developed in \cite{Heath2006} composed of more than $50$ reactions and species.
We consider the system with initially 105 molecules for species with non-zero initial concentration.
Analysis of the model reveals that the phosphorylated form of $FRS2$ can bind the protein $Src$, and then this new complex, $Src$:$FRS2$, can relocate out. We want to check if the expected value of  $\#Src$:$FRS2$ during the first $3000$ seconds reaches a maximum value greater than $40$. We do that by checking the property $supE_{>40}[\#Src$:$FRS2]_{[0,3000]}$. The formula evaluates to true, and in Figure \ref{fig:sto} we analyze the expected value and standard deviation of $ \#Src$:$FRS2$. We obtain these values directly from the logic considering the quantitative interpretation of $supE_{=?}[\#Src$:$FRS2]_{[T,T]}$ and $supV_{=?}[\#Src$:$FRS2]_{[T,T]}$ for $T\in [0,3000]$. It is possible to see that, after an initial peak, relocation causes exponential decay.  

\begin{wrapfigure}{r}{0.54\textwidth}
  \vspace{-4em}
  \begin{center}
\includegraphics[scale=0.10]{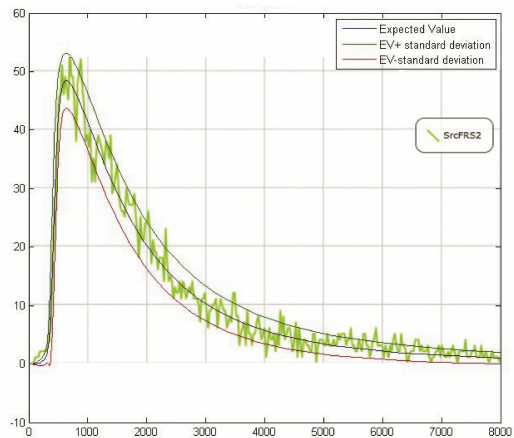}
  \end{center}
    \vspace{0em}
  \caption{Expected number and standard deviation of species of $\#Src$:$FRS2$ in the FGF pathway during the first $8000$ seconds estimated by our method is compared with a stochastic simulation of the same species.}
    \label{fig:sto}
   \vspace{-1.5em}
\end{wrapfigure}

In the same figure we show a single stochastic simulation of the system for the same initial conditions, confirming our evaluation. Moreover, the approximation can be justified theoretically. 
$\#Src$:$FRS2$ converges to zero necessarily and this demonstrates the unimodality of the distribution of the species; we note that the variance is finite, so 
Eqn \eqref{eq:hypothsis} holds.

\textbf{DNA strand displacement of GW network}
GW is a network related to the G2-M cell cycle switch \cite{novak1993numerical}. Under particular initial conditions, it has been shown that GW can emulate the Approximate Majority algorithm \cite{cardelli2014morphisms}. Here, we consider the two-domain DNA strand-displacement implementation of GW \cite{Cardelli2010}. The corresponding CRN is composed of $340$ species and $240$ reactions. For our analysis the species of interest are $R$ and $P$, whose initial conditions are $\#R=90$ and $\#P=10$; initial conditions of other species are taken from the referenced papers. We check the property $P_{>0.9}[\#R-\#P,[50,+\infty]]_{[6000,35000]}$ for a system of size $N=45000$, which is verified as true in $28$ minutes.
 
\section{Concluding Remarks}\label{concl-sec}
We presented a novel probabilistic logic for analysing stochastic behaviour of CRNs
and proposed an approximate model checking algorithm based on the LNA of the CME. 
We have demonstrated on four non-trivial examples that
LNA-based model checking enables analysis of CRNs with hundreds of species, and even infinite CTMCs, at a cost of some loss of accuracy. 
It would be interesting to find bounds on the approximation error when the system is far from the thermodynamic limit. However, the error is not only dependent on the value of $N$, but also on the structure of the CRN, the rates, and the property.
As future work, we plan to improve the accuracy of the method near critical points similarly to the approach of \cite{Elf2003}, and to extend the logic with more expressive temporal operators in the style of CSL \cite{aziz2000model}. We also intend to release a software tool based on LBS \cite{Pedersen2010}. 

\section*{Acknowledgements}
The authors would like to thank Luca Bortolussi for helpful discussions. 

\bibliographystyle{abbrv}
{\scriptsize

\bibliography{LanguagesForBiology.bib}
}

\end{document}